\newtheorem{lemma}{Lemma}
\begin{document}


\title{Parasite Chain Detection in the IOTA Protocol}

\author{Andreas Penzkofer, Bartosz Kusmierz$^{1}$, Angelo Capossele, William Sanders, Olivia Saa$^{2}$ \\
\small{IOTA Foundation, 10405 Berlin, Germany}\\
\small{$^1$ Department of Theoretical Physics, Wroclaw University of Science and Technology, Poland}\\
\small{$^2$ Department of Applied Mathematics, Institute of Mathematics and Statistics, University of São Paulo, São Paulo, Brazil
}}

\maketitle

\begin{abstract}
In recent years several distributed ledger technologies based on directed acyclic graphs (DAGs) have appeared on the market. Similar to blockchain technologies, DAG-based systems aim to build an immutable ledger and are faced with security concerns regarding the irreversibility of the ledger state. However, due to their more complex nature and recent popularity, the study of adversarial actions has received little attention so far. In this paper we are concerned with a particular type of attack on the IOTA cryptocurrency, more specifically a \textit{Parasite Chain} attack that attempts to revert the history stored in the DAG structure, also called the \textit{Tangle}.

In order to improve the security of the Tangle, we present a detection mechanism for this type of attack. In this mechanism, we embrace the complexity of the DAG structure by sampling certain aspects of it, more particularly the distribution of the number of approvers.
We initially describe models that predict the distribution that should be expected for a Tangle without any malicious actors. We then introduce metrics that compare this reference distribution with the measured distribution. Upon detection, measures can then be taken to render the attack unsuccessful.
We show that due to a form of the Parasite Chain that is different from the main Tangle it is possible to detect certain types of malicious chains. We also show that although the attacker may change the structure of the Parasite Chain to avoid detection, this is done so at a significant cost since the attack is rendered less efficient.
\end{abstract}

\section{Introduction}

With the arrival of Bitcoin \cite{bit1} a new decentralized payment system based on a trust-less peer-to-peer network has established. Bitcoin is essentially a protocol for reaching consensus between independent entities - that do not need to trust each other - on a chronologically ordered record of transactions. This data structure, which is also called a {\it{blockchain}}, is now a cornerstone of many other Distributed Ledger Technologies (\textbf{DLT}s) \cite{ethereum}, \cite{algorand}, \cite{cardano}, \cite{cryptocurrencySurvey}.

Despite their great success, events of congestion and the correlated high transaction (\textbf{tx}) fees \cite{Huberman_monopolist} show that limitations in throughput, which can become costly, exist for these types of DLTs. Since these events are effectively created due to the bottleneck of a limit of txs that can be processed, scalability has become a core research topic. Furthermore, this issue also hinders the adoption of the technology for applications such as Internet of Things (IoT) \cite{dorri2017}. 

To overcome scalability issues, several techniques have been proposed ranging from increasing block size and frequency, to side chains \cite{croman2016}, ``layer-two'' structures like Lightning Network \cite{light}, Sharding \cite{luu2016}, and other consensus mechanisms \cite{algorand,cardano,casper}. Some DLT's have also replaced the blockchain structure with a directed acyclic graph (\textbf{DAG}). This approach is used in IOTA \cite{I_WP} and other protocols \cite{Lewenberg2015}, \cite{byteball}, \cite{Boyen2016}, \cite{spectre}, \cite{meshcash}, \cite{conflux}, \cite{dexon}, \cite{vegvisir}. DAG-based protocols reach consensus on a {\it{partially}} ordered log of transactions, allowing the log to have width and which can increase the throughput of the system.

\subsection{The IOTA protocol}

In this paper, we study the DAG-based IOTA protocol introduced in \cite{I_WP}, where transactions are recorded in a DAG dubbed the {\it Tangle}. The vertices in this DAG are transactions. If there is an edge between two transactions $x\leftarrow y$, we say that $y$ (directly) approves $x$. If there is a directed path from $y$ to $x$ but no edge, $y$ indirectly approves $x$. A transaction with no approvers is called a {\it tip}. Under the IOTA protocol, all incoming transactions attach themselves to the Tangle by approving two (not necessarily distinct) tips.  

Since the DAG structure is heavily determined by the order and manner in which txs are approved, the algorithm used to select the tips which a new transaction will attach itself to, is of critical importance. It is important to note, that the nodes in this protocol are free to choose from the array of available tip selection methods. In this paper we focus on two tip selection algorithms:

Firstly, the {\it Uniform Random Tip Selection} (\textbf{URTS}), is a very basic algorithm: we simply select a tip from the set of all available tips with a uniform random distribution. Despite being the most efficient method numerically, it would also be accompanied by security vulnerabilities and allow for tip selection behavior that is non-beneficial for the safety of the Tangle \cite{I_WP}. 
However, as we will see in Section \ref{sec:URW} it is closely related to the next algorithm, which is the one that most resembles the current implementation in the protocol. More particularly, the analytical derivations for the following algorithm depend on the solutions for this URTS algorithm.

The second tip selection algorithm employs a Monte Carlo Markov Chain: here we select the tip at the end of a random walk (\textbf{RW}), beginning at the first tx in the Tangle. In the current implementation of the IOTA protocol, the RW can be biased towards transactions with large cumulative weight, which is a tx's own weight plus the sum of all own weights of directly or indirectly approving txs. The amount of bias is determined by a parameter $\alpha$. When $\alpha = 0$, we dub the RW as {\it Unbiased Random Walk} (\textbf{URW}). Consequently, when $\alpha > 0$, we dub it {\it Biased Random Walk} (\textbf{BRW}). As we will see in Section \ref{sec:BRW} under the current network conditions and implementation of the IOTA protocol, the BRW has very similar properties to the URW and, therefore, it is sufficient to study the URW.

\subsection{The Parasite Chain attack}

Due to the probabilistic nature of the immutability of the ledger state, cryptocurrencies are subject to certain security concerns \cite{Li_securityblockchain}. One such concern is that an adversary may revert the ledger to an earlier state if he possesses a sufficient amount of hashing or voting power \cite{Rosenfeld2014}. In practice, such an adversarial action would result in a fork of the ledger and enable the possibility for a double-spend of funds.

\cite{I_WP} describes several attack scenarios under which an adversary may attempt a double-spend on the IOTA protocol. In this paper we focus on the Parasite Chain (\textbf{PC}) attack. 
In this attack the adversary places a value tx in the main Tangle, whilst also creating a side chain in secret that contains a double-spending tx, see  Fig. \ref{fig:PC_overview}. 
Once the PC is revealed to the network, the attacker then exploits the tip selection algorithm by leading honest txs to approve the PC instead of the main Tangle. This is possible, through the following mechanism: firstly, by attaching the PC to a particular tx (the \textit{root} tx), the cumulative weight of this root tx can be significantly increased and the RW tip selection algorithm will then be drawn towards this root. In addition, the attacker can also increase the number of links from the PC to this root tx, to increase the probability for the RW to continue onto the PC. If the attack is successful, the part of the Tangle that approves the originally visible double-spend tx is abandoned and the PC becomes the new main Tangle, thereby changing the ledger history.

\cite{cullen_PC} discuss this attack in further detail and show that for certain values of the parameter $\alpha$ in the BRW, the attack has an increased likelihood to succeed. Furthermore, although the security is improved with increasing value of $\alpha$, it should not be selected too high, otherwise txs would be left behind and excluded from the ledger. Therefore, $\alpha$ has to be selected in a manner that is a compromise between the safety of the system and avoiding orphanage of txs, i.e. txs not being approved. More generally \cite{cullen_PC} also showed that the success of the attack also depends on other variables, such as the time that the PC is revealed to the main tangle and the number of root txs the PC is attached to the main Tangle.

\begin{figure}[t]
\centering
\includegraphics[width=0.5\textwidth]{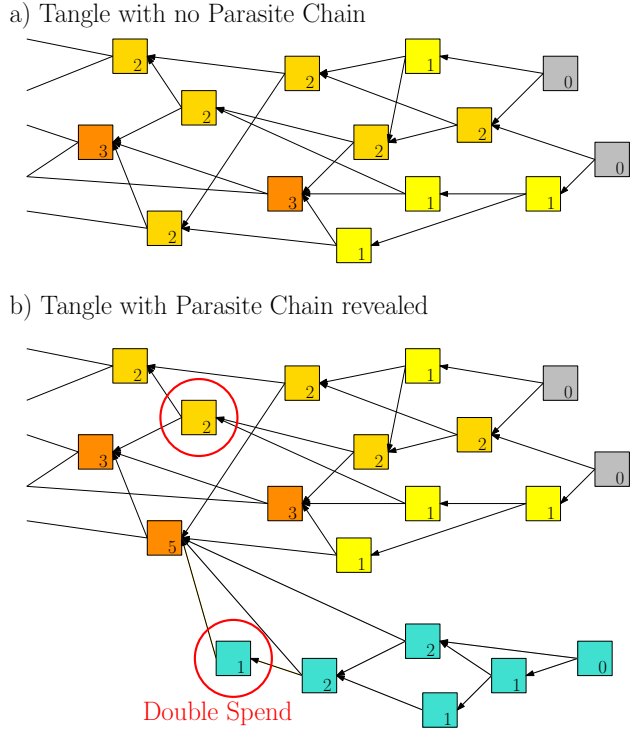}
\caption{Examples of parts of a Tangle. Transactions  are  represented  by  squares and  approvals  by  arrows. The number of approvers of each transaction is in the bottom right corner of each transaction. Parts (a) and (b) represent the Tangle with and without a conflicting Parasite Chain attached. Transactions that conflict with the most recent part of the main Tangle are shown in blue.}
\label{fig:PC_overview}
\end{figure}

In this paper, we present a method to counter the vulnerability of the PC attack by introducing certain detection mechanisms. These detection methods employ the knowledge of the underlying structure of the Tangle, more particularly the likelihood for txs to have a certain number of direct approvers. We show that by measuring the distribution of approvers for a selected set and comparing it to a reference distribution, the Tangle can be checked for abnormalities. Upon flagging a suspicious part of the Tangle as a PC, countermeasures can be taken, such as restarting the RW with an increased $\alpha$ value \cite{ferraro}. Since the tip selection method can be switched or restarted immediately during the detection, an honest tx issuer who detected the PC, would avoid approving tips of the PC and hence contribute towards rendering the attack unsuccessful. We show that an adversary would have to significantly reduce the efficacy of the attack if he wants for the PC to remain undetected.

\subsection{Contributions of this paper}

Our contributions in this paper are twofold:
firstly, we present analytical models that capture the underlying structure of the Tangle, in the form of the likelihood for txs to have a certain number of direct approvers. We present models for the URTS and the URW, and we show that the distribution for the URW is tightly linked with the distribution for URTS. We compare the analytical models to simulation results and show that generally a good agreement is reached in the high load regime. In the low load regimes, the model predicts the values well only if txs are only allowed to approve the same tx once.

Secondly, we describe a method of how to reduce the susceptibility of the IOTA cryptocurrency towards a specific type of double-spend attack, more particularly a Parasite Chain. This enables a proactive tool against malicious actors and improvements for security. We measure how 'distant' a certain sample set of txs is from the derived distributions and we show that we can employ the distance metric, to effectively detect simple versions of a PC. If the attacker decides to avoid the detection methods, he is forced to build the PC in a more complicated way, which is correlated to a decrease of the attack's efficacy. We demonstrate this for two ways of selecting sample sets: through RWs and through collecting the txs that directly or indirectly approve a particular tx, i.e. the future cone of that tx. We conclude that by combining these two methods, a powerful detection tool is provided to honest tx issuers that allows them to make their tip selection more safe and PC attacks less likely to succeed.


\section{Model for the Number of Approvers}\label{sec:model number approvers}

Due to the complexity of the Tangle, it can be difficult to derive exact solutions which describe certain mechanics. On the other hand, deterministic solutions can provide a sufficiently good picture to describe certain mechanics despite their simplified assumptions. Here, we attempt to discuss the likelihood of a randomly selected tx having a given number of direct approvers $n$, through a deterministic model. The model is facilitated by employing a linear approximation of the exit probability distribution, which represents the likelihood of a tip being selected.

Since the tip selection is performed probabilistically, the same tip may get selected twice, although this is only likely to happen in the \textit{low load regime}, i.e. if the rate of arriving txs is low. Since the tip selection algorithm is not enforced, it remains up to the node to decide what to do in this case. Here, we consider two particular scenarios that do not require rerunning the tip selection algorithm and which are, therefore, considered numerically efficient options. Initially, we will discuss both scenarios, before focusing on the former in more detail: in the single edge model (\textbf{SEM}) only one edge is created instead of two, and in the multi-edge model (\textbf{MEM}) two edges are created to the same tx.
As we will show, in the \textit{high load regime}, i.e. when the rate of arriving txs is high, SEM and MEM converge to the same distributions. Since the Tangle is built to allow for high throughput, we will mainly focus on the high load regime and the difference between SEM and MEM can be neglected. However, since the protocol should also be analyzed in the low load scenario and the node is free to choose the tip selection algorithm, the two methods are discussed for completeness and comparison.
\\

Once a tx is selected for the tip selection, it is assumed that the Proof-of-Work plus the propagation to the network equal to the time $h$, and that no other node will be aware of the approval before this delay has passed \cite{I_WP}. Let $p_i$ be the probability of being selected by the tip selection algorithm for the $i$-th tx, $N_i$ the final number of approvers of $i$ and $t_i$ the time of first approval. We notice that $p_i$ depends on the number of tips and hence may change with the arrival of a new tx.
Note that if txs $i_1$ and $i_2$ are the approvees of $i$ then no further txs are attached to them later than $t_i+h$, due to the delay $h$.
It is noteworthy that the following applies for URW: after $t_i+h$ none of the directly or indirectly approved txs, i.e. the entire past cone of $i$, receive any further approvers. The exit probability of the URW at $i$ remains then unaffected after $t_i+h$ by the arrival of new tips and $p_i$, and also the probability for the random walk to pass through the tx, remains constant after that time. This is not the case if backtracking is allowed, i.e. the RW is allowed to return to the past cone of $i$ once it left it, however, this is currently not implemented in the IOTA protocol. 

Throughout this work we will frequently employ the Poisson distribution function
\begin{equation}\label{eq:Poisson}
	P(\gamma,n)=e^{-\gamma}\frac{\gamma^{n}}{n!}
\end{equation}
where $\gamma$ is a rate. 

In the following sections we employ 
\begin{lemma}
The number of approvers for a given tx $i$ is given by
\begin{equation}\label{eq:model_Ni}
	N_i=1+p_{i0}+\text{Pois}(\lambda_i)
\end{equation}
where $\rm{Pois}(\cdot)$ is the Poisson distribution,
\begin{equation}\label{eq:model_lambdai}
	\lambda_i=\lambda\int_{t_i}^{t_i+h}dt \left\{ \begin{matrix} 2p_i(t) & \text{for MEM} \\ 2p_i(t)-p_i(t)^2 & \text{for SEM}\end{matrix} \right.
\end{equation}
$\lambda$ is the tx rate in units of $h$, and
\begin{equation} \label{eq:model_p0}
	p_{i0}=\left\{ \begin{matrix} p_i(t_i) & \text{for MEM, approximately} \\ 0 & \text{for SEM\;\;\;\;\;\;\;\;\;\;\;\;\;\;\;\;\;\;\;\;\;\;\;\;}\end{matrix} \right.
\end{equation}
\end{lemma}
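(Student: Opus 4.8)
The plan is to condition on the first-approval time $t_i$ and on the deterministic profile $t\mapsto p_i(t)$, and then to exhibit $N_i$ as the outcome of thinning a Poisson process of transaction arrivals. The first step is to pin down which transactions can create an edge into $i$ (counted with multiplicity, so that the count genuinely differs between the MEM and the SEM). By definition of $t_i$, no transaction issued before $t_i$ approves $i$, and the transaction issued at $t_i$, the \emph{first approver}, does. Moreover, as recalled in the excerpt, once the first approver is issued it needs time $h$ to propagate, and until then no transaction is aware of any approver of $i$, because every approver of $i$ is issued at a time $\ge t_i$ and hence becomes visible only at a time $\ge t_i+h$; so $i$ is still a tip for every transaction issued in $[t_i,t_i+h)$ and is no longer a tip afterwards. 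Thus the edges into $i$ are produced exactly by the transactions issued in $[t_i,t_i+h)$: the distinguished first approver at $t_i$, together with the transactions of a rate-$\lambda$ Poisson process on $(t_i,t_i+h)$ which, by the restart property of the Poisson process and the fact that the conditioning ``no selection of $i$ before $t_i$'' concerns only the arrivals up to $t_i$, is independent of everything up to and including time $t_i$.

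Next I would treat the two groups separately, using the linear-approximation hypothesis that each tip-selection lands on $i$ with probability $p_i(t)$, independently across the two selections of a transaction and across transactions. The first approver contributes one edge by definition; in the MEM its other selection lands on $i$, independently, with probability $p_i(t_i)$, contributing a second edge, so its total contribution is $1+\mathrm{Bern}\!\big(p_i(t_i)\big)$; in the SEM a double selection is collapsed to a single edge, so the first approver contributes exactly one edge and $p_{i0}=0$. This yields the ``$1+p_{i0}$'' summand. The qualifier ``approximately'' in \eqref{eq:model_p0} reflects that this bookkeeping is heuristic: it treats one of the first approver's two selections as landing on $i$ for free, whereas conditioning properly on the event ``at least one selection is $i$'' would replace $p_i(t_i)$ by $p_i(t_i)/(2-p_i(t_i))$; in any case $p_{i0}$ is tiny and essentially irrelevant in the high-load regime that is the focus.

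For the remaining arrivals on $(t_i,t_i+h)$ I would apply the thinning theorem for non-homogeneous Poisson processes. In the SEM this is clean and exact within the model: thin the rate-$\lambda$ transaction process, keeping a transaction with probability $1-(1-p_i(t))^2=2p_i(t)-p_i(t)^2$ (the probability it selects $i$ at least once, in which case the SEM gives it exactly one edge); the retained transactions form a Poisson process of intensity $\lambda\big(2p_i(t)-p_i(t)^2\big)$, so the number of edges is $\mathrm{Pois}\!\big(\lambda\int_{t_i}^{t_i+h}(2p_i(t)-p_i(t)^2)\,dt\big)=\mathrm{Pois}(\lambda_i)$. In the MEM one regards the two tip-selection slots of each transaction as points of a Poisson process of rate $2\lambda$ on $(t_i,t_i+h)$ and keeps a slot with probability $p_i(t)$; the retained slots form a Poisson process of intensity $2\lambda p_i(t)$, and each is an edge into $i$, so the count is $\mathrm{Pois}\!\big(\int_{t_i}^{t_i+h}2\lambda p_i(t)\,dt\big)=\mathrm{Pois}(\lambda_i)$. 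Because these contributions use selections disjoint from — and, by hypothesis, independent of — the first approver's selections, summing the two groups gives $N_i = 1 + p_{i0} + \mathrm{Pois}(\lambda_i)$ with independent summands, as claimed.

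The step I expect to bear all the weight is this reduction to Poisson thinning, which is legitimate only under the independence built into the ``linear approximation of the exit probability'' and under treating $p_i(\cdot)$ as a fixed deterministic profile rather than a random one correlated with the arrival pattern. There is also a genuine approximation hiding in the MEM branch: the two slots of a transaction occur at the same instant, so the slot process is not literally a rate-$2\lambda$ Poisson process, and a short generating-function computation shows that the MEM edge count is exactly Poisson only when the $p_i(t)^2$ terms are negligible — again the high-load regime, which is also where $p_{i0}$ vanishes and the MEM and SEM formulas coincide. Everything else is routine: the boundary instant $t=t_i+h$ is a null set, the distinguished arrival at $t_i$ needs only the restart property of the Poisson process, and the leading ``$1$'' presumes $i$ is eventually approved, which is automatic in the high-load regime and can otherwise be imposed by conditioning.
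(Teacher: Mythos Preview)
Your proposal is correct and follows essentially the same approach as the paper: both arguments isolate the first approver, then treat the remaining approvals on $(t_i,t_i+h)$ via Poisson thinning, with the SEM case handled identically through the retention probability $1-(1-p_i(t))^2$. The only minor difference is in the MEM bookkeeping: the paper explicitly splits the later arrivals into single- and double-approving transactions to obtain $\mathrm{Pois}(\lambda_{i1})+2\,\mathrm{Pois}(\lambda_{i2})$ and then ``combines'' them, whereas you pass directly to a rate-$2\lambda$ slot process and thin; both routes are approximations in exactly the place you identify (the $p_i^2$ terms), and your version is arguably more transparent about where the Poisson law fails to be exact.
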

\begin{proof}
Assume there is a Poisson process of arriving txs \cite{I_WP} with rate $\lambda$, i.e. the number of arrivals within the interval $h$ after $i$ receives its first approval is given by the random variable $N$. Let us consider SEM first. The probability of $i$ receiving an additional approver, once a tx arrives is $p_i^+(t)=1-(1-p_i(t))^2$. Hence from the viewpoint of $i$, txs arrive at rate $p_i^+\lambda$ (of which all would reference $i$). Furthermore, for independent events it holds that $\mathbb{P}(N=n)=\mathbb{P}(M_1+M_2=n)$, where $\mathbb{P}(X=x)$ is the probability that the random variable $X$ takes the value $x$, and $M_1$ and $M_2$ are again Poisson processes, with rates $\mu_1$ and $\mu_2$. We can, therefore, assume that the arrivals of attachments to $i$ occur through a series of Poisson processes with rate $p_i^+\lambda dt$ at time intervals $dt$, which leads to the integral form. In the case of MEM, we consider two rates of arriving tx that approve $i$: txs that approve $i$ twice or once, and their rates of arrival are $p_i(t)^2\lambda$ and $(p_i^+(t)-p_i(t)^2)\lambda$, respectively. With the same argument as above, we can find the integral form
\[	N_i=1+\text{Pois}(\lambda_{i1})+2\text{Pois}(\lambda_{i2}) \]
where
\begin{align*}
	\lambda_{i1}&=\lambda\int_{t_i}^{t_i+h} 2p_i(t)(1-p_i(t))  dt\\
	\lambda_{i2}&=\lambda\int_{t_i}^{t_i+h} p_i^2(t)  dt
\end{align*}
These two Poisson processes are independent and can be further combined. Finally, the first approver also has the likelihood $p_{i0}$ to approve the same tip twice. This assumption does not take into account that for small $\lambda$ the Poisson process has a significant impact on the number of approvers. More specifically, for small $\lambda$ the likelihood is increased that a single tip is simultaneously selected by one or multiple txs and hence the probability to have an even number of approving edges is increased compared to an odd number. 
\end{proof}

Generally the average probability to have $n$ approvers is given by 
\begin{equation}\label{eq:P_mean}
	P(n)=\sum_{i=1}^N \mathbb{P}(N_i=n)
\end{equation}
where $N$ is the cardinality of a list of txs that are considered. Since as previously discussed for URW $p_i(t)$ is only variable for a short time (maximally $h$ after $i$ is revealed), we assume $p_i(t)=p_i \forall t$ is approximately true. The integral forms in (\ref{eq:model_lambdai}) can, therefore, be further simplified, which we employ in the next section.

\subsection{Uniform Random Tip Selection}\label{sec:URTS}

In the URTS algorithm, tips are selected at random from the set of $L$ available tips and with the most recent assumption $p_i=L^{-1}$ on average. According to \cite{I_WP}, the number of tips is approximately $L=2\lambda$. However, for small $\lambda$, the number of tips is limited by one instead. For simplicity, we assume that
\begin{equation}\label{eq:model_tips}
	L=1+2\lambda
\end{equation} 
Using (\ref{eq:P_mean}) and (\ref{eq:model_Ni}) the probability to have $n$ approvers (or equivalently $n-1$ beyond the first) is then given by the Poisson distribution function 
\begin{equation}
    P_{U}(n)=P(\lambda_U,n-1)
\end{equation}
with the rate
\begin{equation*}
	\lambda_{U}= \left\{ \begin{matrix} 2\lambda L^{-1} & \text{for MEM} \\ 2\lambda L^{-1}(1-0.5L^{-1}) & \text{for SEM}\end{matrix} \right.
\end{equation*}
Note that in the high load regime (i.e. $\lambda$ is large), the quadratic term can be neglected and MEM and SEM converge to the same rate, as previously discussed.

\begin{figure}[ht]
\begin{minipage}{0.45\textwidth}
    \hspace{-1.0cm}
    \includegraphics[width=1.3\textwidth,trim=0 0 0 0,clip=true]{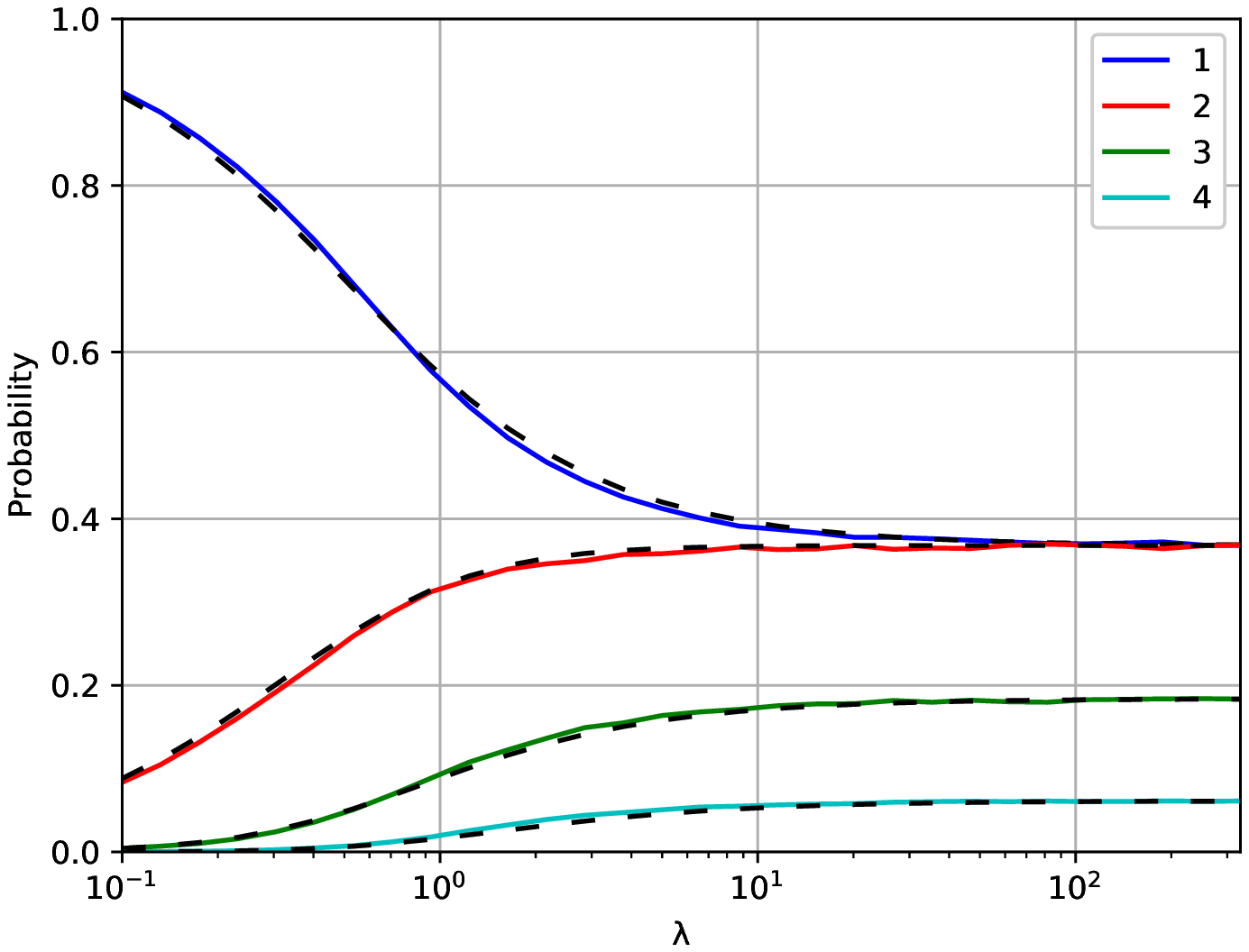}
    a) Single-Edge model (SEM)\newline
\end{minipage}\hfill
\begin{minipage}{.45\textwidth}
    \hspace{-0.3cm}
    \includegraphics[width=1.3\textwidth,trim=0 0 0 0,clip=true]{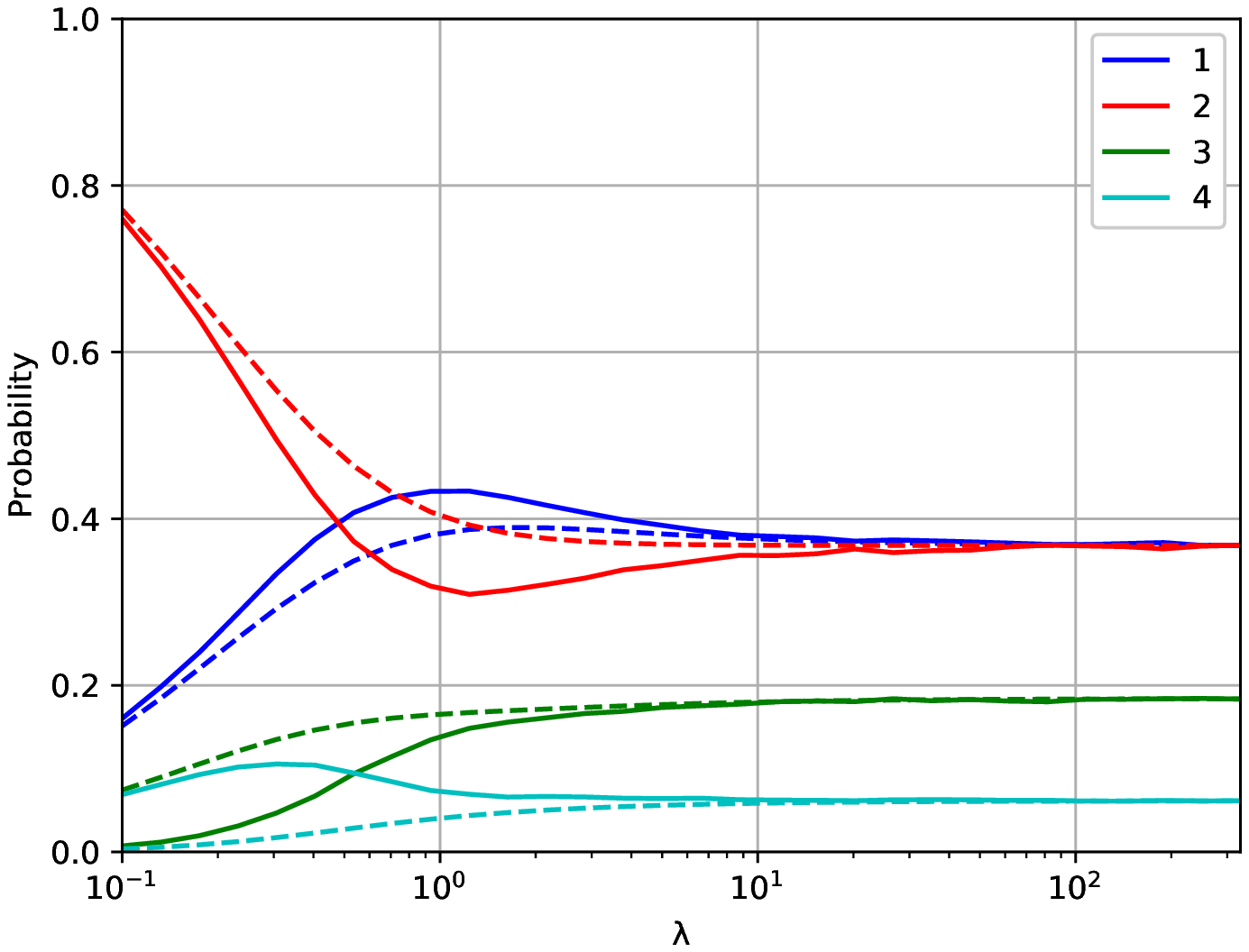}
    b) Multi-Edge model (MEM)\newline
\end{minipage}\hfill
\vspace{-.2cm}
\caption{Probability for a randomly selected tx, in a Tangle created by URTS, to have $n$ approvers with $\lambda$ for $n=\{1,..,4\}$. Values from simulation results and analytically predicted values are shown with continuous and dashed lines, respectively.}
\label{fig:URTS_SEM_MEM}
\end{figure}

Fig. \ref{fig:URTS_SEM_MEM} shows a comparison of the numerical and analytical values of $P_{U}(n)$ for both SEM and MEM. It can be seen that for SEM the model represents the numerical values well. In both models, the distribution converges towards the same distribution in the high-load regime. This is because for large $\lambda$ almost none of the tips are selected twice. However, in the low load regime (i.e. $\lambda < 10$) the predicted values do not match for MEM. This is due to the discrete nature of the Poisson process, which as described above, is not accounted for. Furthermore, even numbers of approvers have a higher probability (see $n=4$), while odd numbers of approvers are less likely than predicted (see $n=3$).

\subsection{Unbiased Random Walk }\label{sec:URW}

The tip selection algorithm that is currently employed in IOTA is the Monte-Carlo-Markov-Chain RW with the parameter $\alpha$ \cite{I_WP}. Typically, analytical solutions are easier to find by initially considering $\alpha=0$, where effects such as txs that are left behind, or the dependence of the RW on the cumulative weight, play no role. Here we follow the same path, before increasing the complexity of the analysis by considering $\alpha>0$ in Section \ref{sec:BRW}. We employ SEM for the analysis in this section, since the analytical model provides more accurate results in this case. However, for most of our analysis, both models would be appropriate, since both predict the numerical values well in the high-load regime, which is in the focus in this work. Note that \cite{Kusmierz_continsim} observed that for $\alpha=0$, the mean tip number converges to $L\approx 2.1\cdot \lambda$. However, employing this observation does not lead to improvements for this model and hence the same average tip number as for URTS is employed.

For a given number of tips $L$, we can order the exit probabilities by their likelihood. We then define the $L$-normalized exit probability $e(x)$ to be the exit probability distribution that is normalized from the index interval $\{1,..,L\}$ onto the relative index interval $(0,1]$. For large enough $\lambda$, the expected exit probability of the $i-th$ most likely tip is then approximately given by
\begin{equation}
    \mathbb{E}e_L(i)\approx\int_{(i-1)/L}^{i/L}e(x)dx
\end{equation}
The $L$-normalized exit probability can be expressed as 
\begin{equation}\label{eq:palpha0}
    e(x)=1+f(x)
\end{equation} 
where $x \in [0,1]$, $\int^1_0 f(x)=0 $ and $f(x)>-1$.

Fig. \ref{fig:exitprobability} shows the numerically calculated $L$-normalized exit probability for URTS and URW, as well as a linear fit for URW. $e(x)$ has numerically been calculated by averaging over $10^3$ samples of tip sets. For each tip set, $10^6$ tip selections are performed and the tips are ordered by their exit probability $e_L(i)$. Despite the high number of tip selection samples, the values of indices close to 0 (1) are slightly over- (under-) estimated due to stochastic effects, as can be seen for URTS where the value is expected to be $e_{U}(x)=1$ over the entire interval. It can be seen that for the fitted curve the numerical values agree well for most of the range, apart from the smaller indices, where the exit probability is noticeably increased, suggesting that certain tips in the Tangle have a considerably higher probability to be selected.

\begin{figure}[ht]
\centering
\includegraphics[width=0.6\textwidth,trim=0 0 0 0,clip=true]{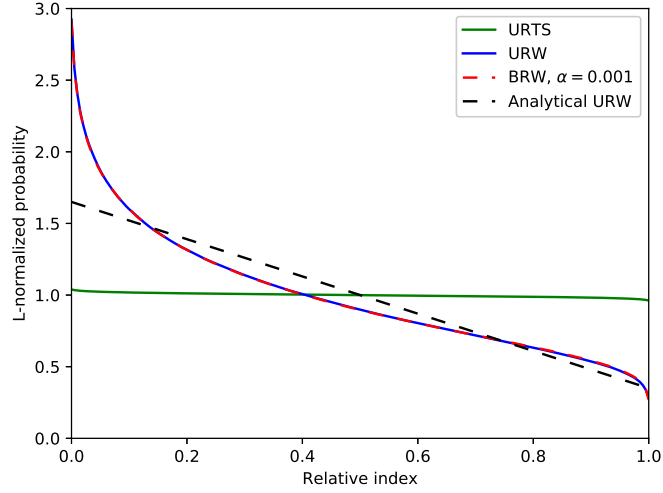}
\caption{$L$-normalized exit probability with the ordered relative index number for $\lambda=100$. The curve for the analytical linear approach is shown for $a=1.3$.}
\label{fig:exitprobability}
\end{figure}

For a given relative index number $x$, we can assume a Poisson process of txs and the probability density of having $n$ approvers is then given by
\begin{equation}\label{eq:p_urw}
    p_{U\!RW}(n,x)=P(e_{U\!RW}(x)\lambda_U,n-1)
\end{equation}
The expected value is then given by
\begin{align}
    P_{U\!RW}(n) &=\int^1_0 dx \;e^{-e_{U\!RW}(x)\lambda_U}\frac{(e_{U\!RW}(x)\lambda_U)^{n-1}}{(n-1)!}\\
     & = P_{U}(n)\int^1_0dx\;e^{-f(x)\lambda_U}(1+f(x))^{n-1} \nonumber
\end{align}
As can be seen, the distribution for the URW can be decomposed into a product of the distribution for URTS times a factor dependent on the exit probability function. This highlights how these two tip selection algorithms are correlated, and why their distribution appears similar.

For URW, the probability to pass through a given tx, is equal to the final value of the exit probability during the time the tx was a tip. Hence to derive the approver statistics observed along the path of RWs, (\ref{eq:p_urw}) is weight by $e(x)$, and the expected distribution is
\begin{align}
    P^*_{U\!RW}(n) &=\frac{1}{b}\int^1_0 dx \; p_{U\!RW}(n,x)e(x)
\end{align}
where 
\begin{equation*}
    b = \sum_{n=0}^\infty \int_0^1 dx \; p_{U\!RW}(n,x)e(x)
\end{equation*}
normalizes the probability.
\\

\begin{figure}[t]
\centering
\includegraphics[width=0.7\textwidth,trim=0 0 0cm 0cm,clip=true]{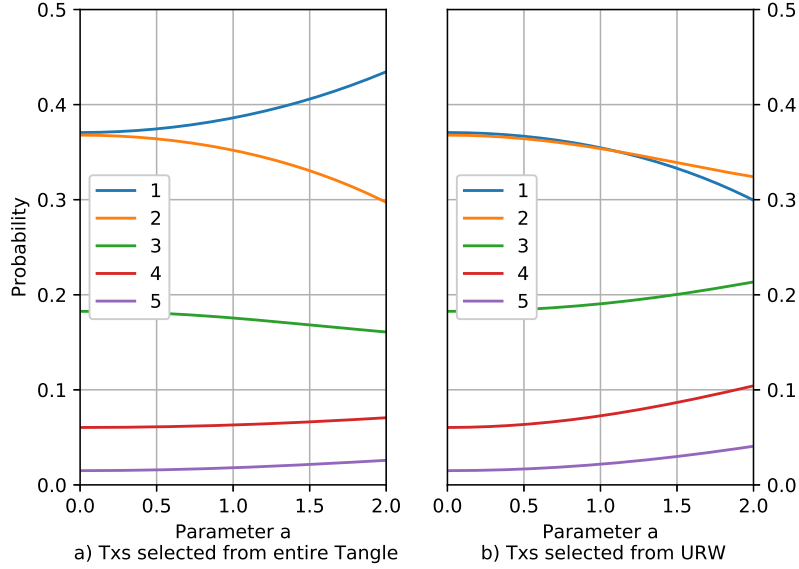}
\caption{Values of (\ref{eq:PURW}) and (\ref{eq:PURWstar}) with the parameter $a$ for several values of approver numbers $n$. $\lambda=100$}
\label{fig:parametera}
\end{figure}

\subsubsection*{Linear approach}\label{sec:model_alph0_simple}
A simplified approach is presented by employing the distribution $f(x)=a(x-0.5)$, $a\in [0,2]$, where $a$ is limited to achieve only positive values for $f$. The expected value is then given by 
\begin{align}
	P_{U\!RW}(n) 	
	&=P_{U}(n)g(n-1) \label{eq:PURW}\\ P^*_{U\!RW}(n)&=\frac{P_U(n)g(n)}{\sum_{m=1}^\infty P_U(m)g(m)} \label{eq:PURWstar}
\end{align}
where
\begin{align*}
    g(n)&=\frac{1}{a} \sum_{j=0}^{n}\lambda_U^{-j-1}\frac{n!}{(n-j)!}
	\left[ e^{-y\lambda_U }(1+y)^{n-j} \right]^{-0.5a}_{0.5a}
\end{align*}
Note that for $a\to 0$ (\ref{eq:PURW}) converges towards URTS: 
\begin{align*}
	\frac{P_{U\!RW}(n)  }{P_U(n)}
	&= -\sum_{j=0}^{n-1}\frac{(n-1)!}{(n-1-j)!} (n-2-j)=1
\end{align*}

Fig. \ref{fig:parametera} shows the probability to have $n$ approvers with the value of the parameter $a$. As can be seen in Fig. \ref{fig:parametera}a), introducing a gradient in the exit probability shifts the likelihood of having a certain number of approvers away from the mean towards less (one approver) or higher numbers of approvers (more than 3), corresponding to increased or decreased exit probabilities. On the other hand, it can be seen from Fig. \ref{fig:parametera}b) that the likelihood for a URW to visit sites with a higher number of approvers, is shifted to only txs with higher numbers (more than 2), since these are sites more frequently visited by the URW.

From Figs. \ref{fig:exitprobability} and \ref{fig:alpha0_SEM}, we can see that a good agreement is achieved with the numerical results for $a= 1.3$. 

\begin{figure}[ht]
\begin{minipage}{0.45\textwidth}
    \hspace{-1.0cm}
    \includegraphics[width=1.3\textwidth,trim=0 0 0 0,clip=true]{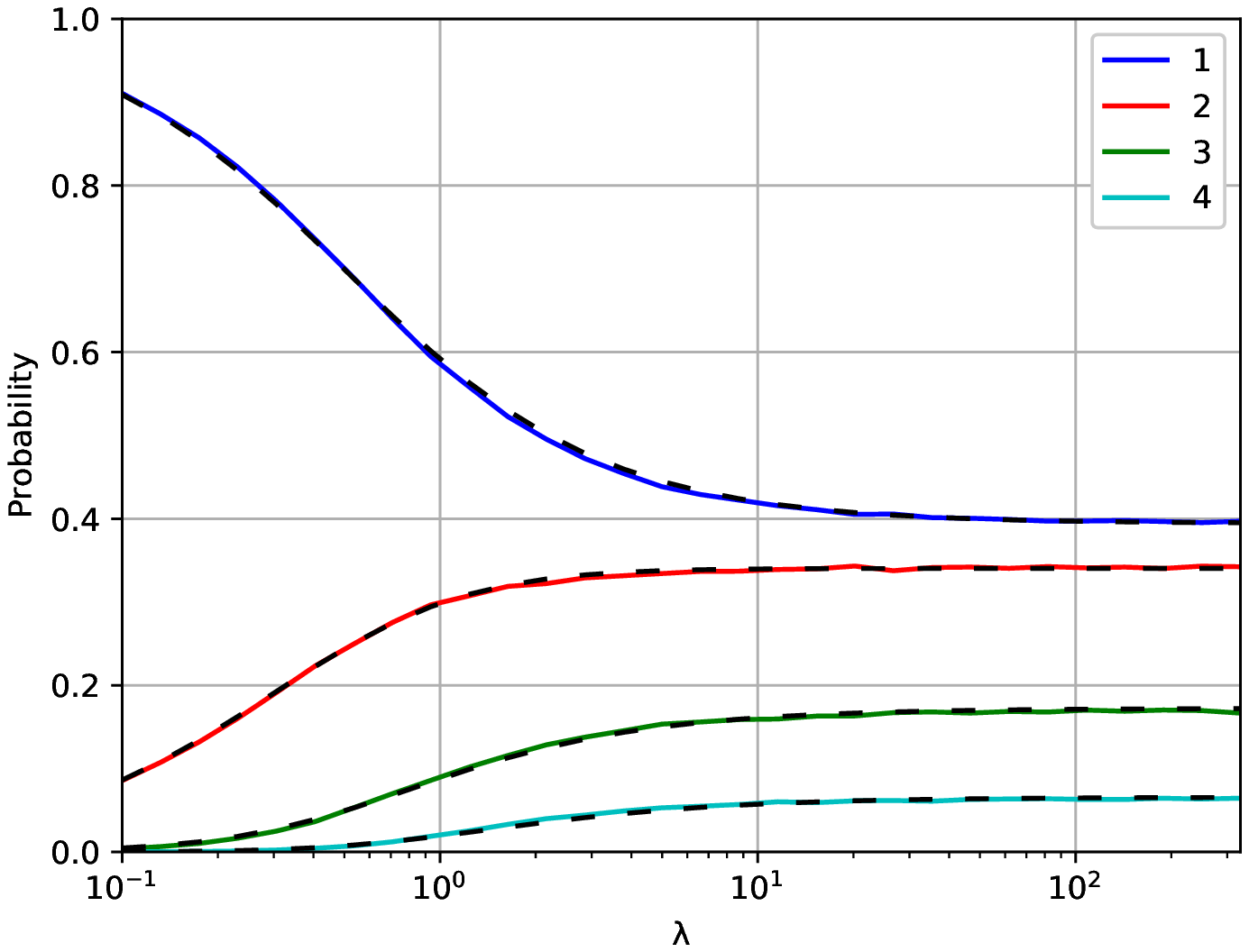}
    a) Randomly selecting a tx from the entire Tangle \newline
\end{minipage}\hfill
\begin{minipage}{.45\textwidth}
    \hspace{-0.3cm}
    \includegraphics[width=1.3\textwidth,trim=0 0 0 0,clip=true]{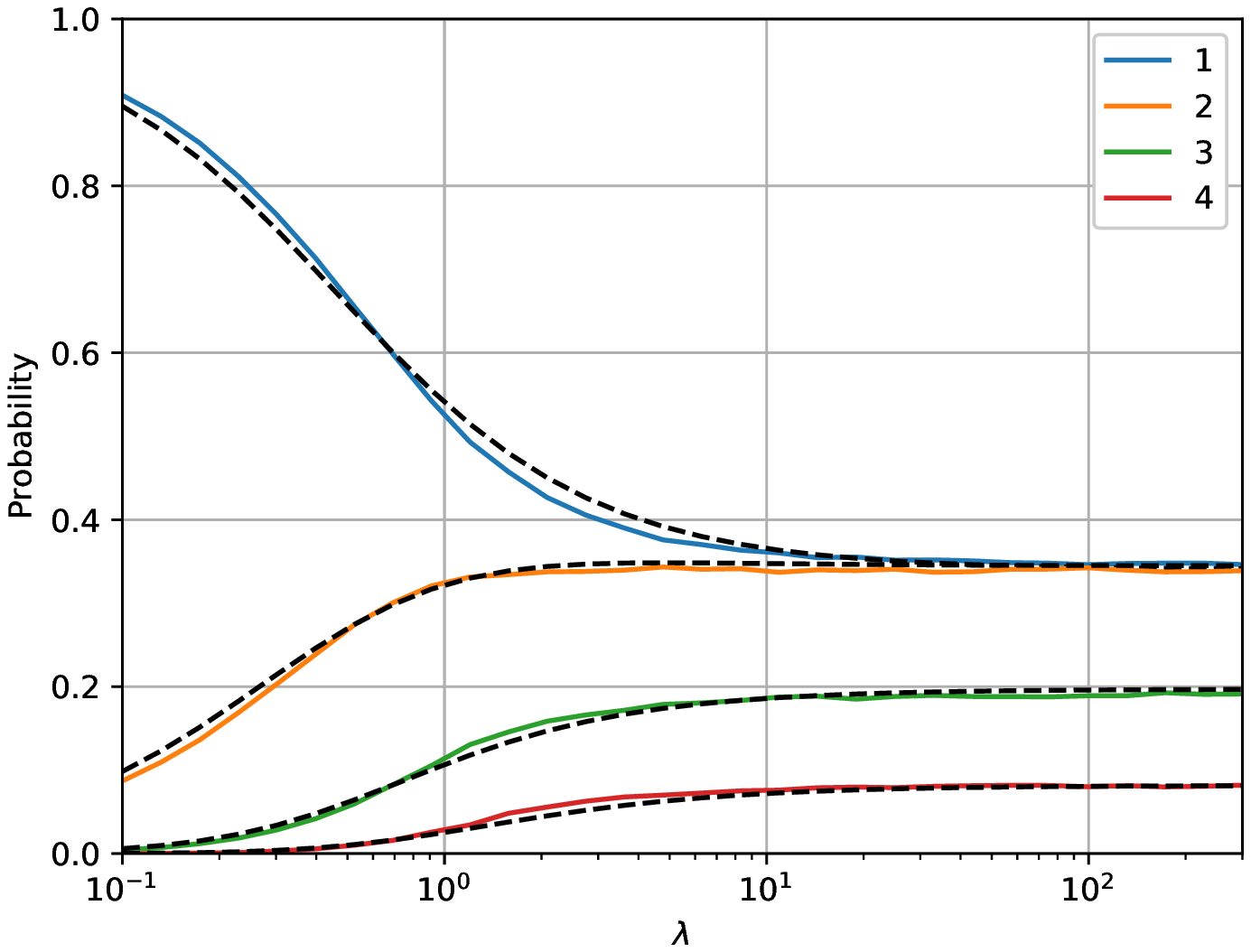}
    b) Randomly selecting a tx from an URW\newline
\end{minipage}\hfill
\vspace{-.2cm}
\caption{Probability for a tx to have $n$ approvers with $\lambda$ for $n=\{1,..,4\}$; for SEM and URW tip selection, and $a=1.3$. Values from simulation results and analytically predicted values are shown with continuous and dashed lines, respectively.}
\label{fig:alpha0_SEM}
\end{figure}


%
\subsection{Biased Random Walk} \label{sec:BRW}

In the IOTA protocol, the RW is made dependent on the cumulative weight of a tx. The cumulative weight $w_x$ of a tx $x$ is defined as the sum of its own weight, plus all own weights of txs directly or indirectly approving the tx. The dependency on the RW is introduced to prevent undesirable behaviors, such as the selection of old tips (lazy tip selection), or certain types of PCs \cite{I_WP}.
In the current implementation of the Tangle, the path of the RW depends, therefore, on a parameter $\alpha$ and the cumulative weight of txs encountered along the path. More specifically, the probability to transition from a tx $x$ to a tx $y$ is given by \cite{I_WP}:
\begin{equation}
	P(x \leadsto y ) = \frac{\exp (\alpha w_y ) }{ \sum_{ z} \exp (\alpha w_z )}
\end{equation}
where the denominator normalizes the probability by summing over all direct approvers $z$ (including $y$) of $x$.

It should be noted that the value of the parameter $\alpha$ ought to be selected carefully. Selecting the value too low would result in the BRW having little or no improvement to the security compared to URW, while a too high value would result in txs being orphaned. Typically, orphanage behavior is observed in simulations at about $\alpha \lambda>1$ \cite{POBLB}. The default value in the IOTA reference implementation of $\alpha=0.001$ \cite{iota.go} ensures that the Tangle structure is the same at least up to a tx rate of $\lambda=100$, since no orphans are created and the exit probability is the same, as can be seen in Fig. \ref{fig:exitprobability}.

With the Minimum Weight Magnitude parameter set to 14 (current setting of the difficulty in the IOTA protocol), the average time for the PoW in a Core i7 platform is about $4.1s$ \cite{Elsts}. At a tx rate in the order of 5 tps (current, over the scope of one day, measured tps by an IRI node \cite{iota.go}), the expected value for the tx rate is $\lambda\approx20$. Since $\lambda\gg 1$, the difference between SEM and MEM is negligible. Furthermore, since $\lambda\alpha\ll1$, the difference between the exit probabilities and hence the approver distributions of URW and BRW are negligible, as discussed in the previous paragraph. We, therefore, employ the models developed in Section \ref{sec:URW} for the following sections.

\section{Parasite Chain detection}\label{sec:PC detection}

The vulnerability of the Tangle to the PC attack is first recognized in \cite{I_WP}. A PC is a specific type of a double-spend attack, where the attacker issues a tx on the main Tangle and, simultaneously, issues a double-spending tx, which is not yet revealed. He then continues to issue further txs that approve the double-spend on the PC in secret, to strengthen the PC chain, till the receiver accepts the main Tangle double-spend tx. Upon receiving the goods or value from the to-be-swindled receiver, the adversary then reveals the PC in the hope that it can outpace the incompatible part of the current main Tangle, by attracting as many tip selection RWs as possible. If successful, the part of the Tangle that approves the main Tangle double-spend tx becomes orphaned, and the PC becomes the new main Tangle, thereby invalidating the main Tangle double-spend tx.

In this paper, we focus on a specific Parasite Chain attack where the adversary pins the PC to one or more ($k$) honest txs in the main Tangle, which are issued at a similar time as the double-spend tx on the main Tangle. We dub this type of a PC $k$-pinned PC. Generally, an adversary may choose to attach to multiple root txs. However, for simplicity we assume a 1-pinned PC.

We assume that the adversary attempts to issue as many txs as possible that directly approve the PC root, as well as the previously issued PC tx. This approach is taken for two reasons: firstly, by directly or indirectly approving this root tx, the cumulative weight is increased both by the txs in the main Tangle, as well as by the PC. This can significantly increase the cumulative weight of the root, compared to the double-spend tx in the main Tangle. As a result, the RWs that are performed for tip selection will be drawn towards the root. Secondly, by attaching as many direct approvers to the root, the probability for the RW to hop onto the PC once the RW passes through the root, depends on the number of links to the PC. It can, therefore, be increased by directly attaching as many malicious txs as possible. Fig. \ref{fig:PC_types}a) shows the most efficient way of attaching a 1-pinned PC, and we dub it a simple PC (\textbf{SPC}).
\\

\begin{figure}[t]
\centering
\vspace{0.8cm}
\includegraphics[width=0.65\textwidth]{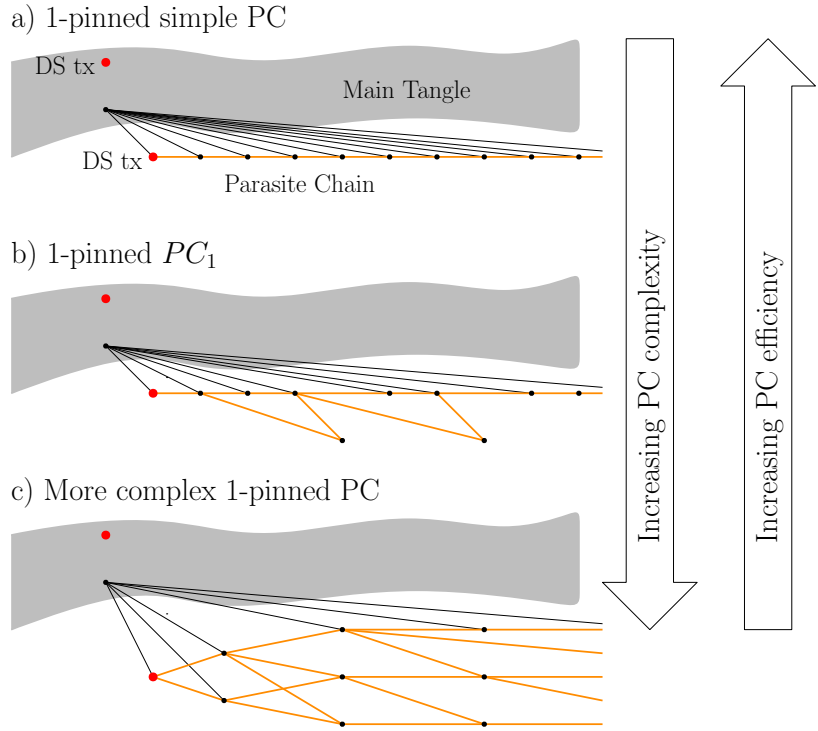}
\caption{Different types of PCs with an increasing effort of the adversary to hide the PC from the detection methods. The number of indicated malicious txs is kept the same for a)-c). With higher complexity, fewer links to the root tx in the main Tangle are created. }
\label{fig:PC_types}
\end{figure}

In this section, we describe a methodology for how we can employ the distributions that we derived in the previous section, to detect a pinned PC. More particularly, since the distribution of approvers in the PC in Fig. \ref{fig:PC_types}a) is significantly different from the distributions derived in the Section \ref{sec:model number approvers}, we propose the implementation of a metric that compares the encountered approver distribution to the expected one, and allows to identify PCs. This would allow nodes to actively counter the creation and success of PCs.

We define the following distance metric \begin{eqnarray}\label{eq:d_P}
    d_{P}=\frac{1}{2}\sum_{n=0}^\infty |P(n,S)-P_{ref}(n)|
\end{eqnarray}
where $P$ is the measured distribution, or probability vector, that is deduced by measuring on the sample of size $S$. In this paper, the sample size is varied between $10$ and $100$, see Fig. \ref{fig:distRWs_CDF}. $P_{ref}$ is the reference  distribution ( equations (\ref{eq:PURW}) or (\ref{eq:PURWstar}) ) and the factor $\frac{1}{2}$ normalizes the distance. 

As discussed, it is more expensive for the attacker to create txs in the PC with higher numbers of approvers. Hence, it is useful to put increased weight on txs with higher numbers of approvers. However, due to the low probability of having a high number of approvers, the difference between the actual value $P(n)$ and $P_{ref}(n)$ can be relatively large for higher values of $n$. To reward having a high number of approvers (instead of penalizing it because of the high variance), we measure if a tx has more than a given number of approvers, and we can therefore also employ the distance
\begin{equation}\label{eq:d_Q}
    d_{Q}=\frac{1}{2}\sum_{n=0}^\infty |Q(n)-Q_{ref}(n)|
\end{equation}
where 
\begin{equation}
    Q(n)=\sum_{m=n}^\infty P(m)
\end{equation}
is the cumulative probability distribution and $Q_{ref}$ is the reference cumulative probability distribution. As will be shown in the following figures, employing $d_Q$ instead of $d_P$ can lead to a better analysis, whilst adding little or no computational overhead for calculating the more complicated metric $d_Q$.

We can employ these metrics in the following way: if the measured distance exceeds a critical value $\eta$, the detection method rejects this path for the RW. More formally, we say a sample $P$ fails an $\eta$-confidence level test if $d_P > \eta$ (or equivalently, the sample $Q$ fails if $d_Q>\eta$). The exact procedure after rejecting a path can be manifold. For example, the node may simply restart the RW from the last point where no suspicious behavior was noted. Or the node may also switch into some kind of safe mode, where the RW is calculated with a higher $\alpha$ value \cite{ferraro}.

Generally, the measurement of the distance is made on a limited sample size and the distances $d_{P}$ and $d_{Q}$ can, therefore, only take certain discrete values. To determine how likely it is that a measurement is treated as a false-positive, i.e. the sample is part of the main Tangle but has been flagged to be in a PC, we study how likely it is that a certain distance occurs in a random sample. Due to the discreteness of the above distance metrics, we investigate the cumulative probability rather than the actual distribution. Note that for small sample sizes $S$, this discreteness can be observed through the clearly visible discontinuity of the derivative of the curves, see e.g. Fig. \ref{fig:distRWs_CDF} for $S=10$. Intuitively, the cumulative probability also provides a visual representation as to how many samples would be below a given distance, and how many false-positives we should expect, when setting $\eta$.

\subsection{Random walk detection}\label{sec:detection:RW}

In this method of creating a sample of approver numbers, we collect and update the sample set of txs while the RW traverses the Tangle. This method is numerically very inexpensive, since the information for the sample set is readily available, as it is already considered when performing the RW. Hence the node may also decide to test samples of different lengths in parallel, in order to distinguish between very local distributions (small sample size, e.g. $S=10$) and more global distributions (e.g. $S>50)$.

The pseudo code described in Algorithm \ref{alg:TipSelection} provides an example high-level overview of how the PC detection mechanism could be implemented in a tip selection algorithm. The additional algorithm \ref{alg:PCdetection} is added, which updates and evaluates for every RW step a sample of approver numbers. If the calculated distance $d$ is larger than a given threshold, the algorithm \ref{alg:PCdetection} returns a flag and the tip selection method restarts, in this case, from the initial starting point of the RW. Furthermore, the RW is changed from the standard RW (with a $standard\_step$) into a safe mode ($safe\_step$). This safe mode could, for example, be a RW with an increased $\alpha$ value, as described in \cite{ferraro}.

\SetKwFunction{FdetectPC}{\textbf{function} detectPC}
\SetKwInOut{Output}{output}
\SetKwInOut{Input}{input}

\begin{algorithm}[t]
\DontPrintSemicolon
\Input{The starting point of the RW defined as initial\_tx}
\Output{The selected tip} 
tx $\leftarrow$ initial\_tx\;
mode $\leftarrow$ standard\;

\While{tx is \textbf{NOT} a tip}{
    \If{mode = safe}{
    tx $\leftarrow$ safe\_step(tx)\;
    }
    \Else{
        tx $\leftarrow$ standard\_step(tx)\;
        \If{Algorithm~2 = TRUE}{
             mode $\leftarrow$  safe\;
             tx $\leftarrow$ initial\_tx\;
        }
    }
}
\textbf{return} tx
\caption{Integration of detection into tip selection. }
\label{alg:TipSelection}
\end{algorithm}

\begin{algorithm}[t]
\DontPrintSemicolon
\Input{current tx, sample list of approver numbers}
\Output{bool value wheter PC is detected}
Add number of approvers of tx to the sample list\;
\If{Sample size $>$ S }{
Remove the oldest element from the sample list
}
Calculate distance d by applying Eq. (\ref{eq:d_P}) or (\ref{eq:d_Q})\; 
pc\_detected $\leftarrow$ (d $>$ threshold)\;
\textbf{return} pc\_detected
\caption{Sample Management}
\label{alg:PCdetection}
\end{algorithm}

Fig. \ref{fig:distRWs_CDF} shows the cumulative probability for the distance to be above a certain value, where the sample size of considered txs has been fixed to $S$ and $P_{ref}$ is given by (\ref{eq:PURWstar}). It can be seen that for small values of $S$ the discrete nature of the distance is clearly visible, while for larger $S$ it becomes increasingly difficult to observe this phenomenon. Note, that the maximum available number of samples from an RW depends on the depth from which the RW is started. Furthermore, $S$ should be selected such that local changes (i.e. the start of a PC) can be detected, which puts further restrictions on the upper limit of $S$.

\begin{figure}[t]
\begin{minipage}{0.45\textwidth}
    \hspace{-1.0cm}
    \includegraphics[width=1.3\textwidth,trim=0 0 0 0,clip=true]{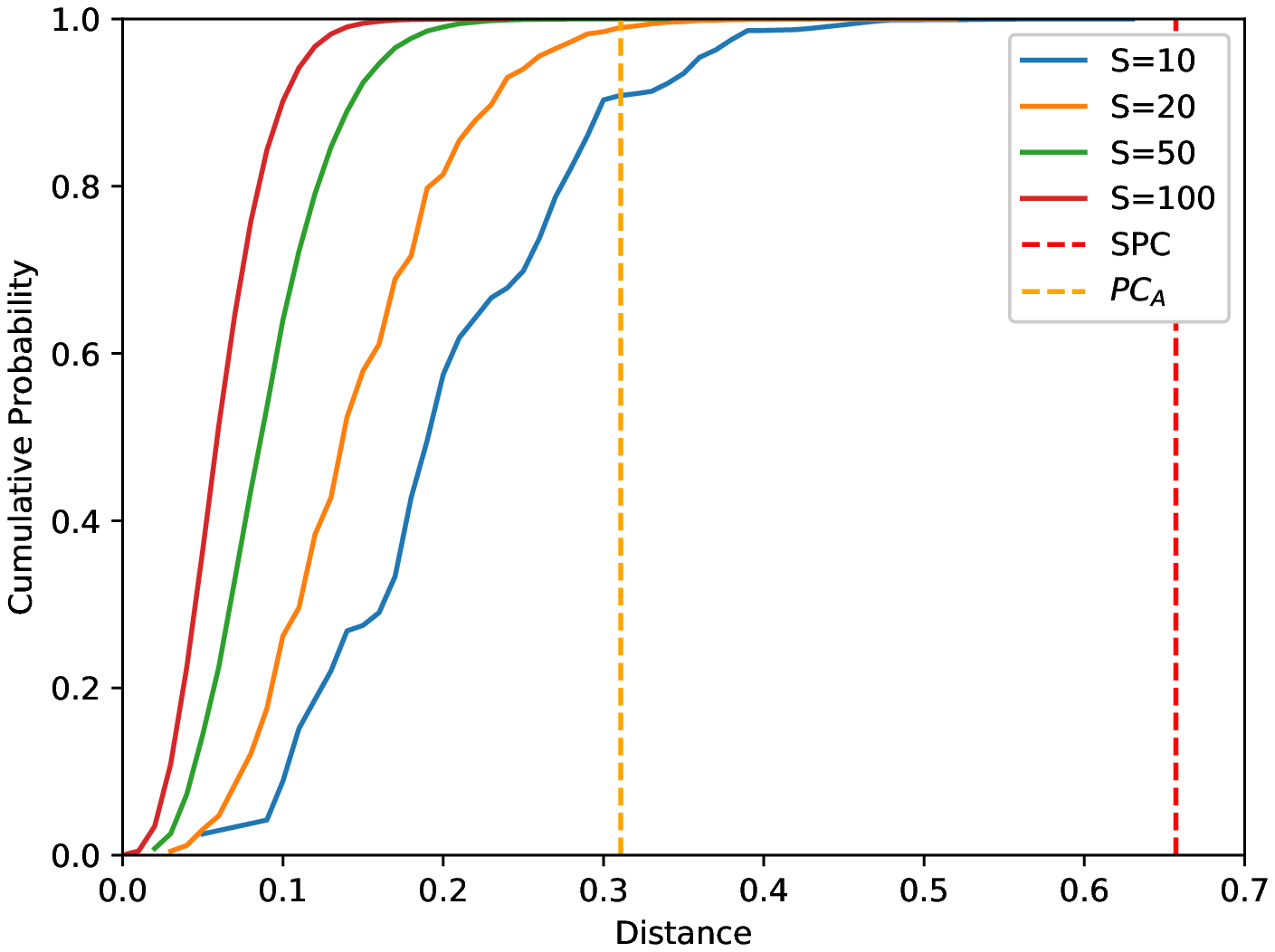}
    a) Distance metric $d_P$
\end{minipage}\hfill
\begin{minipage}{.45\textwidth}
    \hspace{-0.3cm}
    \includegraphics[width=1.3\textwidth,trim=0 0 0 0,clip=true]{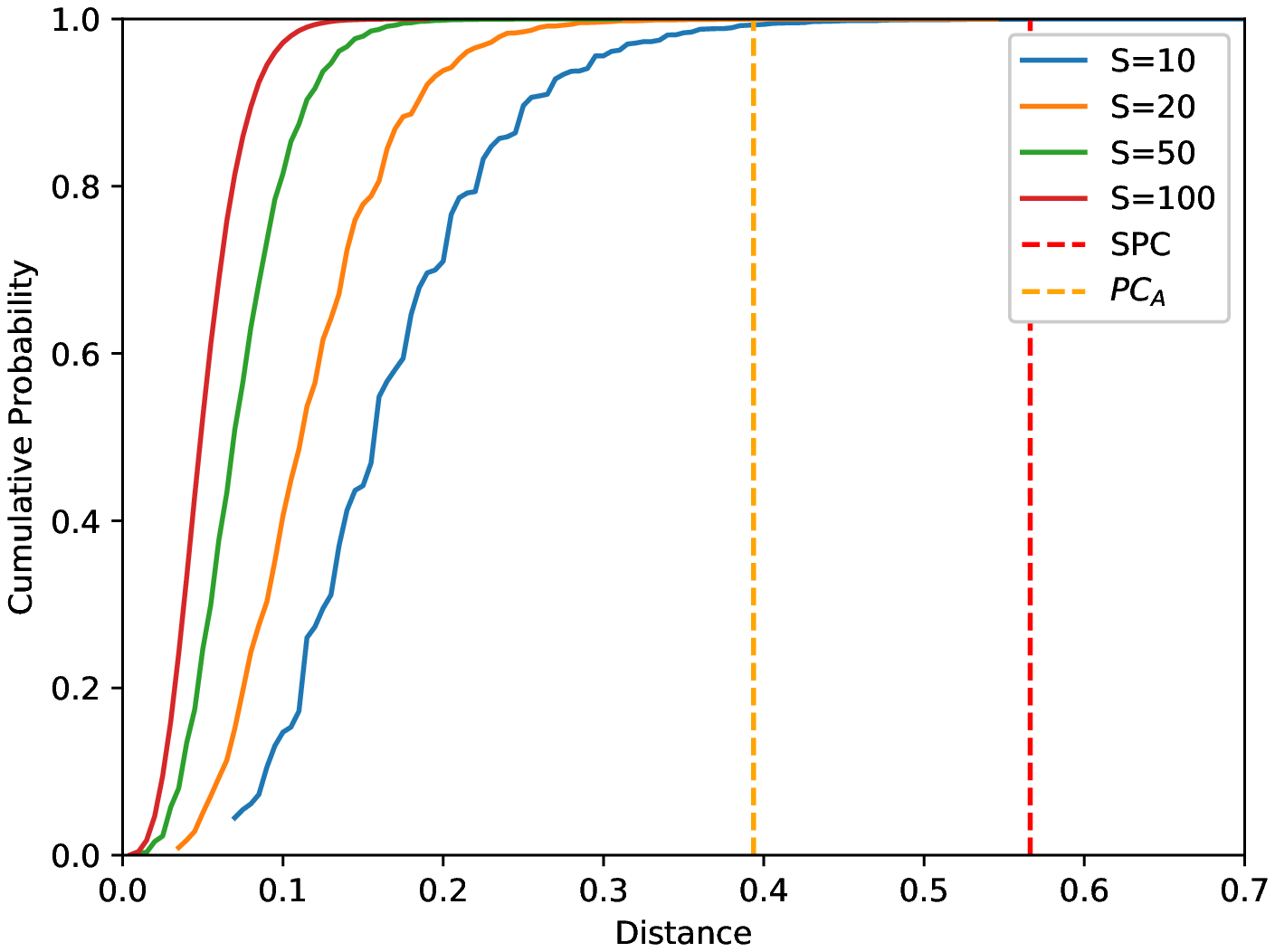}
    b) Distance metric $d_Q$
\end{minipage}\hfill
\vspace{-.2cm}
\caption{Cumulative probability for the measured distance  to be below a given value for several sample sizes $S$. $\lambda=100$, URW. }
\label{fig:distRWs_CDF}
\end{figure}

Let us assume the adversary attaches a 1-pinned PC to the Tangle, and that with probability $p_R$ a malicious tx directly approves the PC root, as well as the previous malicious tx that was attached in the same manner. We denote the part of the PC that is constructed in this way as the \textit{main PC}. The remaining malicious txs are attached in whatever way seems most suitable.
Then, the most efficient way the adversary can build the PC is with $p_R=1$, see Fig. \ref{fig:PC_types}a), where all malicious txs are added as direct approvers to the PC root and the rate at which direct approvers are added to it is given by $r=\mu$.

As can be seen from Fig. \ref{fig:distRWs_CDF}, an $SPC$ is easily detectable, since even for a small number of samples $S$, the measured distance in the PC is larger than the measurements in the main Tangle. By decreasing $p_R$ and attaching to txs along the main PC (1-pinned $PC_1$, see Fig. \ref{fig:PC_types}b) ), the adversary may render the PC undetectable. However, by doing so he weakens the attack since fewer edges are attached to the PC root. Note also that in the $PC_1$ in Fig. \ref{fig:PC_types}b), txs that are not part of the main PC are left behind and due to their low cumulative weight, are therefore unlikely to be selected by the RW as tips.

The selected sample size $S$ should be sufficiently large, since otherwise the detection method may show little success (or return too many false-positives, if the value of $d_C$ is set too low). For example, denote the following PC as $PC_A$: an adversary employs the strategy $PC_1$, by adding a link to the main PC with the probability $1-0.5P_{URW}^*(2)$. Then
\[d_P=1-P^*(1)-P^*(2)\approx 0.26\] 
and the attachment to the root is reduced to the rate
\[r=(1+0.5P_{URW}^*(2))^{-1}\mu\approx 0.85 \mu\]
If $S$ is selected too low (e.g. $S=10$) this type of PC may remain undetected, if $d_P$ (equation \ref{eq:d_P}) is employed for detection, as can be seen from Fig. \ref{fig:distRWs_CDF}a). However, it still may be detectable when changing the detection metric to $d_Q$ (equation \ref{eq:d_Q}).

The PC would certainly remain undetected, if txs are added to the main PC, such that the encountered distribution is $P_{URW}^*(n)$, i.e. $d=0$. Under this condition, the rate is reduced to
\begin{equation}
    r=(1+\sum_{n=1}(n-1)P_{URW}^*(n))^{-1}\mu
\end{equation}
For example, for $\lambda=100$ and $\alpha=0.0001$, $r\approx 0.46\mu$, which requires the attacker to deploy the malicious txs in a much less efficient way,  hence reducing the efficacy of the entire PC attack.

\subsection{Future cone detection}\label{sec:detection:futurecone}

In a pinned PC, the capability of those txs that are pinned to the root, to divert the RW onto the PC relies on their cumulative weight growing as fast as possible. In other words, it is likely that most of the issued malicious txs reference, directly or indirectly, earlier malicious txs. An effective method to consider most of the PC txs is, therefore, to analyze the future cone of the txs that are encountered along an RW, where the future cone of a tx is defined as the set of txs that directly or indirectly approve a given tx.

Note that compared to the method in the previous section, the sample size $S$ can be significantly larger, since the number of samples in the main Tangle would increase exponentially with the distance from the cone's root tx. Until the growth of the future cone reaches the linear phase, where the rate of txs per unit of distance to the root would be constant. However, while the method in Section \ref{sec:detection:RW} is computationally relatively inexpensive, here we must employ a traverse algorithm to efficiently collect a sample of txs, such as Breath-First Search or Depth-First Search \cite{Everitt2018}. It is, therefore, recommended to employ this control mechanism only occasionally or if suspicion is raised.

Due to the larger possible sample size and the way it is sampled, the adversary must attempt to achieve a better agreement of the PC with the reference distribution (\ref{eq:PURW}). For example, the relatively inexpensive PC in \ref{fig:PC_types}b) would exhibit a high distance $d_P$ ($d_Q$), due to the suspiciously large amount of orphans.

The adversary may still create a PC structure that makes the PC less likely to be detected, as Fig. \ref{fig:PC_types}c) indicates. However, since the future cone envelopes the distribution of all txs within a certain distance of the investigated root tx (and in the future cone), the necessary structure becomes significantly more complicated.
More specifically, since the reference distribution itself stems from a structure where all links are employed, this leaves effectively little to no spare links for the adversary to attach to the PC root: due to the fact that the PC is created in secret, the only approvers that the txs can receive before being revealed are the txs from within the PC itself. Hence the average number of approvers in the PC is
\begin{equation*}
\overline{n}_{PC}=\sum_n n P_{PC}(n) \leq \sum_n n P_{ref}(n)= 2
\end{equation*}
where $P_{PC}$ is the probability distribution within the PC and the inequality arises, since the PC creates links to txs within the main Tangle, without receiving any in return. The second equality in the above equation is due to the fact that the total average number of edges per tx in the Tangle must be equal to the number of approvers a tx selects, which is two in the high load regime. Hence, the PC structure that provides optimal resistance to the detection method would require most links to be employed within the PC. The adversary must, therefore, choose a medium between detectability and efficacy of the PC, by deploying links purely within the PC, i.e. he cannot create direct links to the PC root.

In conclusion, the main observation is that, contrary to the approach in Section \ref{sec:detection:RW} where adding a few tx according to the method in \ref{fig:PC_types}b) may be sufficient, in this approach it is more difficult to reproduce the reference distribution. More precisely, in order to imitate the exact distribution, all adversary txs would need to be deployed for mimicking the distribution.

\subsection{Final remarks}

Further improvements can be envisioned for the presented detection tools. For example, we may combine the RW detection in Section \ref{sec:detection:RW} and the future cone detection \ref{sec:detection:futurecone}, since their distributions are different as can be seen by comparing Figs. \ref{fig:alpha0_SEM}a) and \ref{fig:alpha0_SEM}b). Avoiding both detection mechanisms may be noticeably more difficult and their combination could, therefore, lead to an improved success rate. The numerically expensive method in Section \ref{sec:detection:futurecone} may also be more suitable for taking larger samples, while the method in Section \ref{sec:detection:RW} can be easily implemented without much additional numerical effort. Furthermore, the sample size could be continuously varied and alternative sampling mechanisms could be envisioned, which would make it more difficult for the adversary to adopt his strategy.

\section{Conclusion}
We present models to understand and predict a particular aspect of the Tangle structure, more specifically, the likelihood for a transaction to obtain a particular number of direct approvers. We derive solutions for two tip selection mechanisms: firstly, for a Tangle that is built employing a uniform random selection and secondly, if it is constructed employing a random walk. We show that the distributions for the two tip selection methods are linked tightly and also that the distribution depends on whether samples for transactions are taken from the set of all transactions, or only from transactions which are encountered along random walks.

We then employ the models for the approver distributions to detect a specific double-spend attack on the IOTA cryptocurrency, namely the \textit{Parasite Chain} attack. In this type of attack, an adversary attempts to trick a node's tip selection mechanism into approving a double-spend transaction, by building a side Tangle in secret and revealing it at a suitable moment. Since the efficacy of the attack relies on building a side tangle that directly approves a limited set of transactions in the main Tangle, the underlying structure of that Parasite Chain is noticeably different to the main Tangle. By measuring the \textit{distance} to the derived distributions on direct approvers, it is possible to detect certain forms of this side chain. It is shown that the quality of the detection depends on the sample size and we, therefore, propose two different methods of sampling. Firstly, a numerically inexpensive method is proposed, where the node may record the approver statistics along the path of a random walk. However, since the random walk moves relatively fast through the Tangle, a limited amount of transactions can be sampled. In the second approach, the node chooses to sample the future cone of a given transaction. This would ensure that most of the transactions in the Parasite Chain can be measured against the expected distribution.

Through these methods, certain structures of the Parasite Chain may be detectable. This would allow the honest tx issuers to improve their tip selection algorithm to be more safe and may prevent Parasite Chain attacks from being successful. On the other hand, if the adversary chooses to avoid detection by constructing more complex forms that would be more difficult to detect, he has to deploy a significant proportion of transactions in a less effective manner. In either case, the proposed methods present a tool which can reduce the threat imposed by Parasite Chains.


\end{document}